\newtheorem{lemma}{Lemma}
\newtheorem{theorem}{Theorem}
\newtheorem{conjecture}{Conjecture}
\newtheorem{corollary}{Corollary}
\newtheorem{definition}{Definition}
\newtheorem{assumption}{Assumption}
\newtheorem{question}{Question}
\newtheorem{example}{Example}
\newtheorem{remark}{Remark}
\DeclareMathOperator{\tr}{tr}
\DeclareMathOperator{\Span}{span}
\DeclareMathOperator*{\E}{\mathbb E}
\begin{document}

\title{Deviation From Maximal Entanglement for Mid-Spectrum Eigenstates of Local Hamiltonians}

\begin{CJK}{UTF8}{gbsn}

\author{Yichen Huang (黄溢辰)\orcidlink{0000-0002-8496-9251}
\thanks{This work, dedicated to the Chinese New Year of the Tiger, was supported by NSF grant PHY-1818914 and the U.S. Department of Energy, Office of Science, National Quantum Information Science Research Centers, Quantum Systems Accelerator.}
\thanks{The author was with the Center for Theoretical Physics, Massachusetts Institute of Technology, Cambridge, MA 02139 USA, and also with the Department of Physics, Harvard University, Cambridge, MA 02138 USA (e-mail: huangtbcmh@gmail.com).}
}

\maketitle

\end{CJK}

\begin{abstract}

In a spin chain governed by a local Hamiltonian, we consider a microcanonical ensemble in the middle of the energy spectrum and a contiguous subsystem whose length is a constant fraction of the system size. We prove that if the bandwidth of the ensemble is greater than a certain constant, then the average entanglement entropy (between the subsystem and the rest of the system) of eigenstates in the ensemble deviates from the maximum entropy by at least a positive constant. This result highlights the difference between the entanglement entropy of mid-spectrum eigenstates of (chaotic) local Hamiltonians and that of random states. We also prove that the former deviates from the thermodynamic entropy at the same energy by at least a positive constant.

\end{abstract}

\begin{IEEEkeywords}
Chaos, entropy, local Hamiltonian, quantum entanglement, quantum mechanics.
\end{IEEEkeywords}

Preprint number: MIT-CTP/5396

\section{Introduction}

Almost all quantum many-body systems are chaotic.\footnote{This statement is not mathematically precise because ``chaotic'' is not defined. In this paper we do not attempt to define it, for there is no clear-cut definition of quantum chaos.} Therefore, quantum chaos is an important research topic in condensed matter and statistical physics. It is also of fundamental interest in high energy physics, since black holes are believed to be maximally chaotic in the sense of being the fastest scramblers in nature \cite{SS08}.

Although generic chaotic systems are not exactly solvable, an analytical understanding of quantum chaos may be gained from the following assumption.

\begin{assumption}
The properties of a quantum chaotic system are described by a random state (subject to macroscopic constraints if any).
\end{assumption}

\begin{remark} \label{a:1}
This assumption underlies the typicality argument \cite{Llo88, GLTZ06, PSW06}, which explains the emergence of the canonical ensemble in isolated quantum many-body systems.
\end{remark}

Using techniques from probability theory, one can derive some consequences of the random-state description in Assumption \ref{a:1}. One might expect that such consequences are universal properties of quantum chaotic systems. This approach is elegant and independent of the microscopic details of the system under consideration. In various specific models, certain consequences of Assumption \ref{a:1} have been numerically observed \cite{KH13, YCHM15, VR17, Hua19NPB} or even rigorously proved \cite{YGC23, HH23} to be at least approximately valid. The empirical success of the random-state description (Assumption \ref{a:1}) may be surprising especially when the chaotic system under consideration is deterministic (not random).

Physical systems usually have some structure. In a quantum spin system on a lattice, interactions among spins may be local in the sense that the interaction range of each term in the Hamiltonian is upper bounded by a constant. Since locality is not captured by Assumption \ref{a:1}, it is important to understand whether and to what extent the properties of chaotic systems with local interactions deviate from the consequences of the random-state description. This is a technically challenging task---it is difficult to verify or refute Assumption \ref{a:1} for a particular chaotic system. Whether the deviation exists may depend on the property under consideration.

Entanglement, a concept of quantum information theory, has been widely used in condensed matter and statistical physics to provide insights beyond those obtained via ``conventional'' quantities. A large body of literature exists on ground-state entanglement \cite{HLW94, VLRK03, KP06, LW06, Has07} and entanglement dynamics \cite{CC05, ZPP08, BPM12, KH13} in various systems. The scaling of entanglement \cite{ECP10} reflects the efficiency of tensor network simulations of quantum many-body systems \cite{Vid03, VC06, SWVC08, Osb12, GHLS15}.

It is important and of high current interest to understand the entanglement of excited eigenstates of chaotic local Hamiltonians. To this end, heuristic descriptions of universal entanglement scaling behavior have been developed \cite{Deu10, SPR12, DLS13, YCHM15, DKPR16, VR17, DLL18, GG18, LG19, Hua19NPB, Hua21NPB, MTA+21, HMK22}. In particular, Assumption \ref{a:1} raises the following question.

\begin{question} \label{q:1}
Does the difference between the entanglement entropy of mid-spectrum eigenstates of chaotic local Hamiltonians and that of random states vanish in the thermodynamic limit?
\end{question}

The thermodynamic limit is the limit where the system size (number of spins in a spin system) goes to infinity.

Assume without loss of generality that the Hamiltonian under consideration is traceless so that the mean energy (average of all eigenvalues) of the Hamiltonian is zero.

\begin{definition} [mid-spectrum eigenstate] \label{d:mid}
For a traceless Hamiltonian, an eigenstate with energy $E$ is a mid-spectrum eigenstate if the ratio of the number of eigenvalues in the interval $[-|E|,|E|]$ to the Hilbert space dimension vanishes in the thermodynamic limit. Otherwise, it is a non-mid-spectrum eigenstate.
\end{definition}

Question \ref{q:1} concerns only mid-spectrum eigenstates because the energy of a random state approaches zero in the thermodynamic limit. For a non-mid-spectrum eigenstate of any (not necessarily chaotic) local Hamiltonian and for contiguous subsystems whose size is a constant fraction of the system size, Lemma \ref{l} (see also Lemma \ref{l:thermo}) below implies that the entanglement entropy deviates from the maximum entropy by at least a positive constant. Thus, the entanglement entropy of a non-mid-spectrum eigenstate is different from that of a random state. Question \ref{q:1} concerns mid-spectrum eigenstates and is not answered by Lemma \ref{l} or \ref{l:thermo}. Theorem 2 in Ref.~\cite{Hua19NPB} considers the average entanglement entropy of all eigenstates of a local Hamiltonian and gives a lower bound on its deviation from the maximum entropy. It does not answer Question \ref{q:1} because only a vanishing fraction of eigenstates are mid-spectrum eigenstates (Definition \ref{d:mid}).

Question \ref{q:1} has been studied. The difference between the entanglement entropy of mid-spectrum eigenstates of chaotic local Hamiltonians and that of random states was observed numerically in Refs.~\cite{YCHM15, GG18}. However, to my knowledge, it was highlighted only in recent works \cite{HMK22, KSVR23, RJK24}.\footnote{After this paper was posted on arXiv, Refs.~\cite{KSVR23, RJK24} appeared.} These numerical studies are limited to relatively small system sizes and suffer from significant finite-size effects. The numerical values of the difference in all these studies are quite small ($\lesssim0.1$). Strictly speaking, whether they vanish in the thermodynamic limit cannot be concluded from the numerical results. Heuristic explanations for the numerically observed differences were given in Refs.~\cite{Hua21NPB, HMK22}. It was even conjectured in Eqs.~(14), (15) of Ref.~\cite{Hua21NPB} that in the thermodynamic limit, the difference in Question \ref{q:1} approaches a universal (model-independent) function of the ratio of the subsystem size to the system size.

This paper focuses on one-dimensional quantum systems and rigorously answers Question \ref{q:1} in the negative. In a spin chain governed by a (not necessarily chaotic) local Hamiltonian, we consider a microcanonical ensemble in the middle of the energy spectrum and a contiguous subsystem whose length is a constant fraction of the system size. We prove that \emph{if the bandwidth of the ensemble is greater than a certain constant, then the average entanglement entropy (between the subsystem and the rest of the system) of eigenstates in the ensemble deviates from the maximum entropy\footnote{The maximum entropy is the logarithm of the Hilbert space dimension of the smaller of the subsystem and the rest of the system.} by at least a positive constant.} This result implies and highlights the difference between the entanglement of mid-spectrum eigenstates of (chaotic) local Hamiltonians and that of random states. The difference is due to the locality of the Hamiltonian. The locality prevents mid-spectrum eigenstates from behaving like completely random states. We also prove that the entanglement entropy of mid-spectrum eigenstates deviates from the thermodynamic entropy at the same energy by at least a positive constant. Therefore, the italicized statement above cannot be understood heuristically from eigenstate thermalization.

The rest of this paper is organized as follows. Section \ref{spre} provides basic definitions and technical background. Section \ref{sres} presents the main result, whose relationship with the thermodynamic entropy is discussed in Section \ref{seth}. The main text of this paper should be easy to read, for most of the technical details are deferred to Appendices \ref{app}, \ref{s:l}.

\section{Preliminaries} \label{spre}

We use standard asymptotic notation. Let $f,g:\mathbb R^+\to\mathbb R^+$ be two functions. One writes $f(x)=O(g(x))$ if and only if there exist constants $M,x_0>0$ such that $f(x)\le Mg(x)$ for all $x>x_0$; $f(x)=\Omega(g(x))$ if and only if there exist constants $M,x_0>0$ such that $f(x)\ge Mg(x)$ for all $x>x_0$; $f(x)=\Theta(g(x))$ if and only if there exist constants $M_1,M_2,x_0>0$ such that $M_1g(x)\le f(x)\le M_2g(x)$ for all $x>x_0$; $f(x)=o(g(x))$ if and only if for any constant $M>0$ there exists a constant $x_0>0$ such that $f(x)<Mg(x)$ for all $x>x_0$.

Consider a chain of $N$ spins, each of which has local dimension $d=\Theta(1)$. The system is governed by a local Hamiltonian
\begin{equation} \label{eq:LH}
H=\sum_{i=1}^NH_i,
\end{equation}
where $H_i$ is the nearest-neighbor interaction between spins at sites $i$ and $(i\bmod N)+1$. For concreteness, we use periodic boundary conditions, but all the results below also hold for other boundary conditions. Assume without loss of generality that $\tr H_i=0$ (traceless) for all $i$ so that the mean energy of $H$ is $\tr H/d^N=0$. We also assume that the expansion of $H_i$ in the generalized Pauli basis does not contain any terms acting only on the spin at site $(i\bmod N)+1$. This is again without loss of generality since such terms can be included in $H_{(i\bmod N)+1}$. These two assumptions imply that
\begin{equation} \label{eq:neq}
\tr(H_iH_{i'})=0,\quad\forall i\neq i'.
\end{equation}
We do not assume translational invariance. $\|H_i\|$ may be site dependent but should be $\Theta(1)$ for all $i$.\footnote{The assumption that $\|H_i\|=\Theta(1)$ for all $i$ can be relaxed. The proof of Theorem \ref{thm:main} remains valid if $\|H_i\|=O(1)$ for all $i$ and if condition (\ref{eq:var}) holds.}

\begin{lemma} [\cite{HBZ19, Hua22AP, Hua22ATMP}] \label{l:vars}
\begin{equation} \label{eq:vars}
s^2:=\tr(H^2)/d^N=\Theta(N).
\end{equation}
\end{lemma}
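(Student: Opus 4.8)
The plan is to expand the square and exploit the orthogonality relation (\ref{eq:neq}). Writing $H^2=\sum_{i,i'}H_iH_{i'}$ and taking the trace, the cross terms with $i\neq i'$ vanish by (\ref{eq:neq}), leaving
\[
\tr(H^2)=\sum_{i=1}^N\tr(H_i^2).
\]
The problem thus reduces to estimating each $\tr(H_i^2)/d^N$ and showing that the sum of $N$ such terms is $\Theta(N)$.

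Next I would reduce each summand to a local quantity. Since $H_i$ acts nontrivially only on the two spins at positions $i$ and $i+1$ and as the identity on the remaining $N-2$ spins, factoring out the trace over those spins gives $\tr(H_i^2)/d^N=\tr_{i,i+1}(H_i^2)/d^2$, where $\tr_{i,i+1}$ denotes the trace over the $d^2$-dimensional space of the two relevant spins; this is just the normalized Hilbert--Schmidt weight of $H_i$ on its support. The two bounds then follow from comparing the Hilbert--Schmidt norm to the operator norm. For the upper bound, $\tr_{i,i+1}(H_i^2)\le d^2\|H_i\|^2$ because $H_i^2$ is positive semidefinite with largest eigenvalue $\|H_i\|^2$, so each summand is at most $\|H_i\|^2=O(1)$. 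For the lower bound, $\tr_{i,i+1}(H_i^2)$ is the sum of the squared eigenvalues of $H_i$, hence at least the single largest one, $\|H_i\|^2$, so each summand is at least $\|H_i\|^2/d^2=\Omega(1)$, using $d=\Theta(1)$ and $\|H_i\|=\Theta(1)$. Summing the $N$ terms yields $\tr(H^2)/d^N=\Theta(N)$.

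The step requiring the most care is the lower bound. The upper bound needs only $\|H_i\|=O(1)$, whereas the lower bound relies on each local Hilbert--Schmidt weight staying bounded away from zero; this is precisely where the hypothesis $\|H_i\|=\Theta(1)$ is used, and it is why the relaxation noted in the footnote replaces that hypothesis with a condition forcing the total variance to grow linearly. Everything else is routine: the vanishing of the cross terms is handed to us by (\ref{eq:neq}), and the tracelessness of $H_i$ enters only indirectly through that relation rather than through either norm estimate.
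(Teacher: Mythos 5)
Your proof is correct and follows essentially the same route as the paper: expand $\tr(H^2)$, kill the cross terms via Eq.~(\ref{eq:neq}), and bound each $\tr(H_i^2)/d^N=\Theta(1)$ using $\|H_i\|=\Theta(1)$. The paper states the per-term $\Theta(1)$ estimate without elaboration; you simply spell out the two-sided bound explicitly, which also correctly identifies where the lower bound $\|H_i\|=\Omega(1)$ is needed.
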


\begin{remark}
$s^2$ is the variance of all eigenvalues of $H$.
\end{remark}

\begin{IEEEproof}
Using Eq.~(\ref{eq:neq}) and since $\|H_i\|=\Theta(1)$ for all $i$,
\begin{multline}
\frac{\tr(H^2)}{d^N}=\sum_{i,i'=1}^N\frac{\tr(H_iH_{i'})}{d^N}=\sum_{i=1}^N\frac{\tr(H_i^2)}{d^N}=\sum_{i=1}^N\Theta(1)\\
=\Theta(N).
\end{multline}
\end{IEEEproof}

Let $A$ be a contiguous subsystem of $L$ spins and $\bar A$ be the complement of $A$ (rest of the system). Assume without loss of generality that $L\le N/2$. Let $d_A:=d^L$ and $d_{\bar A}:=d^{N-L}$ be the Hilbert space dimensions of subsystems $A$ and $\bar A$, respectively. 

\begin{definition} [entanglement entropy]
The entanglement entropy of a pure state $|\psi\rangle$ is defined as the von Neumann entropy
\begin{equation}
S(\psi_A)=-\tr(\psi_A\ln\psi_A)
\end{equation}
of the reduced density matrix $\psi_A:=\tr_{\bar A}|\psi\rangle\langle\psi|$.
\end{definition}

We call $\max_{|\psi\rangle}S(\psi_A)=\ln d_A=L\ln d$ the maximum entropy.

\paragraph{Entanglement of random states}We briefly review the entanglement of random states.

\begin{theorem} [conjectured and partially proved by Page \cite{Pag93}; proved in Refs.~\cite{FK94, San95, Sen96}] \label{t:page}
For a pure state $|\psi\rangle$ chosen uniformly at random with respect to the Haar measure,
\begin{equation} \label{page}
\E_{|\psi\rangle}S(\psi_A)=\sum_{k=d_{\bar A}+1}^{d_Ad_{\bar A}}\frac1k-\frac{d_A-1}{2d_{\bar A}}=\ln d_A-\frac{d_A}{2d_{\bar A}}+\frac{O(1)}{d_Ad_{\bar A}}.
\end{equation}
\end{theorem}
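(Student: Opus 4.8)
The plan is to reduce the claim to a computation in the Laguerre unitary (complex Wishart) ensemble and to exploit the replica identity $S(\psi_A)=-\frac{d}{ds}\tr(\psi_A^{s})\big|_{s=1}$, which holds because $\tr(\psi_A^{s})=\sum_i\lambda_i^{s}$ for the eigenvalues $\{\lambda_i\}$ of $\psi_A$. Write $m:=d_A$ and $n:=d_{\bar A}$, so that $m\le n$ by the assumption $L\le N/2$. First I would realize a Haar-random pure state as $|\psi\rangle\propto\sum_{a,b}X_{ab}\,|a\rangle_A|b\rangle_{\bar A}$ with $X$ an $m\times n$ matrix of i.i.d.\ standard complex Gaussian entries; then $\psi_A=W/T$, where $W:=XX^\dagger$ and $T:=\tr W$. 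The eigenvalues of $W$ follow the Laguerre unitary ensemble, with joint density proportional to $\prod_{i<j}(x_i-x_j)^2\prod_i x_i^{\,n-m}e^{-x_i}$, and imposing the normalization $\tr\psi_A=1$ projects this onto the fixed-trace simplex.

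The engine of the computation is the classical fact that, for the complex Wishart ensemble, the total trace $T$ is independent of the normalized matrix $\rho:=W/T$, with $T$ Gamma-distributed of shape $mn$. Consequently $\E[\tr W^{s}]=\E[T^{s}]\,\E[\tr\rho^{s}]$, so that
\[
\E[\tr\psi_A^{s}]=\frac{\E[\tr W^{s}]}{\E[T^{s}]},\qquad \E[T^{s}]=\frac{\Gamma(mn+s)}{\Gamma(mn)}.
\]
Applying $-\frac{d}{ds}$ at $s=1$ then yields $\E_{|\psi\rangle}S(\psi_A)$ directly. The remaining ingredient is $\E[\tr W^{s}]=\int_0^\infty x^{s}R_1(x)\,dx$, where the one-point density of the Laguerre ensemble is the Christoffel--Darboux sum $R_1(x)=x^{\,n-m}e^{-x}\sum_{k=0}^{m-1}\frac{k!}{(k+n-m)!}\big[L_k^{(n-m)}(x)\big]^2$ in terms of the associated Laguerre polynomials $L_k^{(n-m)}$. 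Each term $\int_0^\infty x^{\,s+n-m}e^{-x}\big[L_k^{(n-m)}(x)\big]^2\,dx$ is a standard Laguerre integral expressible as a finite sum of ratios of Gamma functions, and hence admits an analytic continuation in $s$ around $s=1$.

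The main obstacle is precisely this evaluation and its $s$-derivative: one must (i) produce a closed form for $\E[\tr W^{s}]$ valid for non-integer $s$ in a neighbourhood of $1$, not merely at the integer moments, since the von Neumann entropy requires the derivative, and (ii) carry out the differentiation, which turns the Gamma-ratios into digamma factors $\Gamma'/\Gamma$ and then collapses the resulting double sum over $k$ into a single harmonic sum. Combining this with the derivative of $\E[T^{s}]$ at $s=1$, which contributes the digamma value $\Gamma'(mn+1)/\Gamma(mn+1)=H_{mn}-\gamma$, is where the telescoping occurs, producing the exact Page value $\sum_{k=n+1}^{mn}1/k-\frac{m-1}{2n}$. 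I expect the bookkeeping in step (ii) to be the only genuinely delicate part; it is exactly the calculation completed in Refs.~\cite{FK94, San95, Sen96}.

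Finally, the second equality of the theorem is routine: writing $\sum_{k=n+1}^{mn}1/k=H_{mn}-H_n$ and inserting the Euler--Maclaurin expansion $H_N=\ln N+\gamma+\frac{1}{2N}+O(N^{-2})$ gives $\ln m+\frac{1}{2mn}-\frac{1}{2n}+O(m^{-2}n^{-2})$; subtracting $\frac{m-1}{2n}$ collapses the $1/n$ contributions into $-\frac{m}{2n}$ and leaves an error of order $\frac{1}{mn}$, i.e.\ $\ln d_A-\frac{d_A}{2d_{\bar A}}+\frac{O(1)}{d_Ad_{\bar A}}$, as claimed.
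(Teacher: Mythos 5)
Your proposal is correct, but it is worth separating its two halves, because the paper itself proves much less than you attempt. The first (exact) equality is not proved in the paper at all: the theorem statement attributes it to Page \cite{Pag93} and to Refs.~\cite{FK94, San95, Sen96}, and the paper's entire in-house argument is the second equality, obtained by writing the sum as $H_{d_Ad_{\bar A}}-H_{d_{\bar A}}$ and applying $H_n=\ln n+\gamma+\tfrac1{2n}+O(1/n^2)$ --- which is exactly your final paragraph. Your middle portion --- realizing the Haar state via a complex Gaussian matrix $X$, the independence of $T=\tr(XX^\dagger)$ (Gamma of shape $d_Ad_{\bar A}$) from $XX^\dagger/T$, the factorization $\E[\tr\psi_A^s]=\E[\tr W^s]/\E[T^s]$, and the Laguerre-polynomial form of the LUE one-point function --- is a correct skeleton of how the cited references actually establish the exact formula, so you go beyond the paper here; but since you defer the decisive step (evaluating the Laguerre integrals at non-integer $s$ and collapsing the digamma sums into the harmonic sum) to those same references, your treatment is, like the paper's, ultimately not self-contained on that point, which is acceptable given that the statement is presented as a cited result. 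Two small points of precision: the interchange of $\E_{|\psi\rangle}$ with $-\frac{d}{ds}\big|_{s=1}$ deserves a one-line justification (the eigenvalues lie in $[0,1]$ and the relevant quantities are analytic in $s$, so this is routine); and in your last paragraph the error after expanding both harmonic numbers is $O(d_{\bar A}^{-2})$ --- the $H_{d_{\bar A}}$ expansion dominates --- not $O(d_A^{-2}d_{\bar A}^{-2})$, though your final bound $O(1)/(d_Ad_{\bar A})$ is still correct precisely because $d_A\le d_{\bar A}$.
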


The second step of Eq.~(\ref{page}) uses the formula
\begin{equation}
\sum_{k=1}^{n}\frac1k=\ln n+\gamma+\frac1{2n}+O(1/n^2)
\end{equation}
for $n=d_{\bar A}$ and $n=d_Ad_{\bar A}$, where $\gamma\approx0.577216$ is the Euler-Mascheroni constant.

The distribution of $S(\psi_A)$ is highly concentrated around the mean $\E_{|\psi\rangle}S(\psi_A)$ \cite{HLW06}. This can also be seen from the exact formula \cite{VPO16, Wei17} for the variance of $S(\psi_A)$.

Suppose $f:=L/N$ is a fixed constant such that $0<f\le1/2$. Theorem \ref{t:page} implies that in the thermodynamic limit $N\to\infty$, 
\begin{equation}
    \E_{|\psi\rangle}S(\psi_A)=L\ln d-d^{(2f-1)N}/2+O(d^{-N}).
\end{equation}
Thus, for $0<f<1/2$, the difference between the entanglement entropy of random states and the maximum entropy is exponentially small $e^{-\Omega(N)}$. For $f=1/2$ (equal bipartition), the difference is exponentially close to $1/2$.

\paragraph{Entanglement of non-mid-spectrum eigenstates}Recall that we assume that the Hamiltonian $H$ (\ref{eq:LH}) is traceless. If the energy of a (possibly mixed) state $\rho$ is significantly away from zero, then the subsystem entropy deviates significantly from the maximum entropy in the following sense. Let $\rho_A:=\tr_{\bar A}\rho$ be the reduced density matrix of subsystem $A$. Let $\E_{|A|=L}$ denote averaging over all contiguous subsystems of length $L$. There are $N$ such subsystems.

\begin{lemma} \label{l}
For $L>1$,\footnote{On the left-hand side of Eq.~(\ref{eq:up}), we average over \emph{all} contiguous subsystems of length $L$. This is not strictly necessary for deriving the bound on the right-hand side. For example, if $L=N/2$ (which implies that $N$ is even) and if $|\tr(\rho H)|$ is greater than a certain constant, then the derivation remains valid upon replacing $\E_{|A|=L}$ by averaging over only two contiguous subsystems $A$ and $\bar A$. If $\rho$ is a pure state, then
\begin{equation}
S(\rho_A)=\frac{S(\rho_A)+S(\rho_{\bar A})}2=\frac N2\ln d-\Omega(\tr^2(\rho H)/N).
\end{equation}}
\begin{equation} \label{eq:up}
\E_{|A|=L}S(\rho_A)=L\ln d-\Omega(L\tr^2(\rho H)/N^2).
\end{equation}
\end{lemma}

\begin{IEEEproof}
For $d=2$, the lemma is proved in Ref.~\cite{Hua19NPB}.\footnote{In the proof of Ref.~\cite{Hua19NPB}, $\rho$ is an eigenstate of $H$. However, the proof does not use this property of $\rho$. Thus, without modification it applies to any (possibly mixed) state.} It is straightforward to extend the proof to any integer $d>2$.
\end{IEEEproof}

It is straightforward to extend Lemma \ref{l} to two and higher spatial dimensions.

Let $\{|\Psi_j\rangle\}_{j=1}^{d^N}$ be a complete set of eigenstates of $H$ with corresponding energies $\{E_j\}$. Since $H$ is a local Hamiltonian, the distribution of $E_j$'s (density of states) is approximately normal \cite{KLW15, RCA24, BC15} with mean $\tr H/d^N=0$ and variance $s^2$ (\ref{eq:vars}). Hence, $|\Psi_j\rangle$ is a mid-spectrum or non-mid-spectrum eigenstate (Definition \ref{d:mid}) if $|E_j|=o(\sqrt N)$ or $|E_j|=\Omega(\sqrt N)$, respectively. Let $\Psi_{j,A}:=\tr_{\bar A}|\Psi_j\rangle\langle\Psi_j|$ be the reduced density matrix of subsystem $A$. For any non-mid-spectrum eigenstate and $L=\Omega(N)$, Lemma \ref{l} implies that
\begin{equation}
\E_{|A|=L}S(\Psi_{j,A})=L\ln d-\Omega(1).
\end{equation}
By contrast, Lemma \ref{l} cannot determine whether the entanglement entropy of a mid-spectrum eigenstate deviates from the maximum entropy by at least a positive constant.

\section{Results} \label{sres}

Let
\begin{equation}
J=\{j:E-\Delta\le E_j\le E+\Delta\}
\end{equation}
be a microcanonical ensemble of bandwidth $2\Delta$. The main result of this paper, proved in Appendix \ref{app}, is the following theorem.

\begin{theorem} \label{thm:main}
Suppose $L=\Omega(N)$. There is a constant $C>0$ such that for any $|E|=o(\sqrt N)$ and any $\Delta>C$,
\begin{equation} \label{eq:main}
\frac1{|J|}\sum_{j\in J}S(\Psi_{j,A})=L\ln d-\Omega(1).
\end{equation}
\end{theorem}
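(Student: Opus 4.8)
The plan is to pass from the window average to the entropy of the microcanonical mixed state and then show the latter is bounded away from $\ln d_A=L\ln d$. Let $\rho:=\frac1{|J|}\sum_{j\in J}|\Psi_j\rangle\langle\Psi_j|$ be the microcanonical state, so $\rho_A=\frac1{|J|}\sum_{j\in J}\Psi_{j,A}$. Concavity of the von Neumann entropy gives $\frac1{|J|}\sum_{j\in J}S(\Psi_{j,A})\le S(\rho_A)$, so it suffices to prove $S(\rho_A)\le L\ln d-\Omega(1)$, equivalently that the relative entropy $D(\rho_A\|I_A/d_A)=L\ln d-S(\rho_A)$ is at least a positive constant. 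Note that Lemma~\ref{l} only controls the \emph{first} energy moment and hence gives a deviation $\Omega(L\tr^2(\rho H)/N^2)=o(1)$ in the mid-spectrum regime; the content of the theorem must therefore come from second moments.

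To exhibit the gap I would use the Gibbs variational (dual) characterization of relative entropy: for every Hermitian $X$ supported on $A$, $D(\rho_A\|I_A/d_A)\ge\tr(\rho_A X)-\ln\tr_A(e^X/d_A)$. Let $H_A:=\sum_i H_i$ be the restriction of the Hamiltonian to the bonds lying strictly inside $A$, and take the quadratic test operator $X=-\lambda H_A^2$ with $\lambda>0$ to be optimized. Writing $m_A:=\tr(\rho H_A^2)$, $v_A:=\tr(H_A^2)/d^N$, and $M_4:=\tr(H_A^4)/d^N$, the bound $e^{-t}\le1-t+t^2/2$ applied to the partition function yields $D(\rho_A\|I_A/d_A)\ge\lambda(v_A-m_A)-\tfrac{\lambda^2}2M_4$, and optimizing over $\lambda$ gives $D\ge(v_A-m_A)^2/(2M_4)$. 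The theorem thus reduces to two moment estimates: a fourth-moment bound $M_4=O(v_A^2)=O(N^2)$, and---the heart of the matter---a constant-fraction reduction of the second moment, $v_A-m_A=\Omega(v_A)=\Omega(N)$. The fourth-moment bound is routine: expanding $H_A^4$ and using $\tr(H_iH_{i'})/d^N=\Theta(1)\delta_{ii'}$ from Eq.~(\ref{eq:neq}) together with locality, the surviving contributions are the $\Theta(N^2)$ ``factorized'' pairs, so $M_4=\Theta(N^2)$ (indeed $H_A$ has approximately Gaussian spectral statistics).

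For the second-moment reduction I would exploit that the narrow window constrains the total energy: since every $E_j\in[E-\Delta,E+\Delta]$ with $|E|=o(\sqrt N)$ and $C<\Delta=o(\sqrt N)$, one has $\tr(\rho H^2)=\frac1{|J|}\sum_{j\in J}E_j^2=o(N)$, in sharp contrast with the maximally mixed value $\tr(H^2)/d^N=s^2=\Theta(N)$ from Lemma~\ref{l:vars}. Decompose $H=H_A+H_{\bar A}+H_\partial$, where $H_{\bar A}$ is the restriction to bonds inside $\bar A$ and $H_\partial$ collects the $O(1)$ boundary terms with $\|H_\partial\|=O(1)$. Crucially $[H_A,H_{\bar A}]=0$, so under the maximally mixed (infinite-temperature) state the commuting pair $(H_A,H_{\bar A})$ has a product spectral distribution, each factor approximately Gaussian with variances $v_A=\Theta(L)$ and $v_{\bar A}=\Theta(N-L)$ \cite{KLW15,BC15,RCA23}. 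Passing to the microcanonical ensemble amounts, up to the bounded boundary correction, to conditioning this product distribution on $H_A+H_{\bar A}\approx E$; this is precisely classical Gaussian conditioning of one summand on a narrow window of the total, which reduces the second moment of $H_A$ from $v_A$ to about $v_Av_{\bar A}/(v_A+v_{\bar A})=(1-f+o(1))v_A$ with $f=L/N$. Hence $v_A-m_A=(f-o(1))v_A=\Omega(N)$, and with $M_4=O(N^2)$ we conclude $D(\rho_A\|I_A/d_A)=\Omega(1)$.

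The main obstacle is making the Gaussian-conditioning heuristic quantitatively rigorous. The subtlety is that the global constraint $\tr(\rho H^2)=o(N)$ alone does \emph{not} bound $m_A$: since $\tr(\rho(H_A+H_{\bar A})^2)=m_A+m_{\bar A}+2\tr(\rho H_AH_{\bar A})$, a large negative cross-correlation $\tr(\rho H_AH_{\bar A})$ could in principle leave $m_A$ close to $v_A$. Ruling this out requires genuine control of the joint density of states of the commuting pair near the center---more than the central limit theorem, essentially a local-CLT/Edgeworth estimate with explicit error terms---so that the conditional second moment is computed to relative accuracy $o(1)$. This is exactly where the hypotheses enter: $\Delta>C$ guarantees the window already contains exponentially many ($\Theta(\Delta d^N/\sqrt N)$) eigenvalues, so empirical shell averages concentrate and the smooth-DOS estimates apply, while $|E|,\Delta=o(\sqrt N)$ place us deep in the bulk where the distribution is Gaussian and the conditioning is tight enough to force the constant-factor reduction. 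Controlling the non-commuting boundary term $H_\partial$ (lower order since $\|H_\partial\|=O(1)$) and tracking the $o(1)$ corrections uniformly are the remaining technical points; for very wide windows $\Delta=\Omega(\sqrt N)$ the concavity step becomes lossy and one would instead lower-bound the Holevo gap $S(\rho_A)-\frac1{|J|}\sum_{j\in J}S(\Psi_{j,A})$ directly, recovering the all-eigenstate behavior.
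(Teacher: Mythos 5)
Your opening move (concavity, reducing the task to $S(\rho_A)\le L\ln d-\Omega(1)$) coincides with the paper's, and your variational machinery is sound: with $X=-\lambda H_A^2$ one does get $D(\rho_A\|I_A/d_A)\ge\lambda(v_A-m_A)-\lambda^2M_4/2$, and $M_4=O(N^2)$ follows from standard moment bounds (cf.\ Lemma \ref{l:m}). But the entire content of the theorem sits in the one claim you do not prove, $v_A-m_A=\Omega(N)$, and the Gaussian-conditioning argument you offer for it is precisely the heuristic the paper was written to replace. Two ingredients are missing. First, $\rho$ is diagonal in the eigenbasis of $H$, not of $H_A+H_{\bar A}$, so ``conditioning the product distribution on the sum'' needs a leakage bound: eigenstates in $J$ must have small weight on product states $|a_k\rangle|\bar a_l\rangle$ with $\epsilon_k+\varepsilon_l$ far from $E$. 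This is Theorem 2.3 of Ref.~\cite{AKL16}, the paper's key imported tool, which you never invoke; worse, in the $\epsilon_k^2$-weighted form your second moment requires it is genuinely delicate, because $\epsilon_k^2$ can be as large as $\Theta(N^2)$ on pairs whose \emph{total} energy is close to $E$, so an unweighted leakage bound does not control $\sum_k\epsilon_k^2\,p_k^{\mathrm{far}}$. Second, as you concede yourself, computing the conditional second moment requires density-of-states control at window scale $\Delta=\Theta(1)$, whereas the available rigorous results for local-Hamiltonian spectra \cite{KLW15,BC15,RCA23} are Berry--Esseen bounds with additive CDF error $\Theta(1/\sqrt N)$---the same order as the mass of a constant-width window. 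A local CLT or Edgeworth expansion for such spectra does not exist in the cited literature; flagging this obstacle does not remove it, and it is exactly the ``heart of the matter'' step.

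The paper's proof shows how to get the constant deficit without ever computing a conditional moment, using only one-sided estimates that Berry--Esseen does supply. It fixes a single shell $K=\{k:C_1\sqrt L\le\epsilon_k<(C_1+1)\sqrt L\}$, which carries Haar weight $m=|K|/d_A=\Theta(1)$, and proves $\sum_{k\in K}p_k\le m/2$ (Eq.~\eqref{eq:cst}) from two bounds: the leakage bound of Ref.~\cite{AKL16} with $\Lambda=\Delta+C_2$, and the operator inequality $\rho\le I/|J|$ combined with tail counting---for $k\in K$ the compensating energies $\varepsilon_l\approx E-C_1\sqrt L$ lie $\Omega(\sqrt N)$ deep in the tail of $\bar A$'s spectrum (here $L=\Omega(N)$ enters), so $\tr Q_{E-\epsilon_k}/|J|\le1/(4d_A)$ once $C_1$ and $C$ are large; the hypothesis $\Delta>C$ is exactly what makes the lower bound on $|J|$ beat the Berry--Esseen error. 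Entropy maximization under this single linear constraint then gives $L\ln d-\Theta(1)$. Note also that Theorem \ref{thm:main} demands the conclusion for \emph{all} $\Delta>C$, including $\Delta=\omega(\sqrt N)$, where your route provably fails: for $\Delta\ge\|H\|+|E|$ one has $\rho_A=I_A/d_A$ exactly, so $S(\rho_A)$ carries no information and concavity is irreparably lossy. The paper disposes of that regime by observing that an $\Omega(1)$ fraction of $J$ lies within $\sqrt N$ of $E$ and that entropy deficits are nonnegative, reducing to $\Delta\le\sqrt N$ (Theorem \ref{t:a}); your closing remark about lower-bounding the Holevo gap directly is not a substitute for this step.
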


Theorem \ref{thm:main} is particularly interesting when $C<\Delta=o(\sqrt N)$. In this case, $|E_j|=o(\sqrt N)$ so that $|\Psi_j\rangle$ is a mid-spectrum eigenstate for all $j\in J$.

Comparing Theorem \ref{t:page} and Theorem \ref{thm:main} for $C<\Delta=o(\sqrt N)$, the entanglement entropy of mid-spectrum eigenstates of any one-dimensional local Hamiltonian is provably different from that of random states. The difference is $\Omega(1)$ if $N/2-C'>L=\Omega(N)$ for a certain constant $C'>0$. We conjecture that the difference is also $\Omega(1)$ if $N/2-C'\le L\le N/2$. For $L=N/2$, Refs.~\cite{YCHM15, GG18, Hua21NPB, HMK22, KSVR23, RJK24} provide numerical evidence for this conjecture.

\paragraph{Superposition of eigenstates}Let $\{|\psi_j\rangle\}_{j=1}^{|J|}$ be an arbitrary orthonormal basis of the subspace
\begin{equation}
\mathcal H_J:=\Span\{|\Psi_j\rangle:j\in J\}.
\end{equation}
Let $\psi_{j,A}:=\tr_{\bar A}|\psi_j\rangle\langle\psi_j|$ be the reduced density matrix of subsystem $A$.

\begin{theorem} \label{t:a}
Suppose $L=\Omega(N)$. There is a constant $C>0$ such that for any $|E|=o(\sqrt N)$ and for any $\Delta$ such that $C<\Delta=O(\sqrt N)$,
\begin{equation} \label{eq:a}
\frac1{|J|}\sum_{j=1}^{|J|}S(\psi_{j,A})=L\ln d-\Omega(1).
\end{equation}
\end{theorem}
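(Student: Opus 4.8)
The plan is to reduce Theorem \ref{t:a}, which quantifies over \emph{every} orthonormal basis of $\mathcal H_J$, to a single statement about the microcanonical density matrix. Let $P_J:=\sum_{j\in J}|\Psi_j\rangle\langle\Psi_j|$ be the orthogonal projector onto $\mathcal H_J$ and set $\rho_J:=P_J/|J|$. The key elementary fact is that $\sum_{j=1}^{|J|}|\psi_j\rangle\langle\psi_j|=P_J$ for \emph{any} orthonormal basis $\{|\psi_j\rangle\}$ of $\mathcal H_J$, since the projector does not depend on the choice of basis. Tracing out $\bar A$ gives $\frac1{|J|}\sum_{j=1}^{|J|}\psi_{j,A}=(\rho_J)_A$, a single operator that is the same for every basis.

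First I would apply the concavity of the von Neumann entropy. With weights $1/|J|$ and states $\psi_{j,A}$,
\begin{equation} \label{eq:concav}
\frac1{|J|}\sum_{j=1}^{|J|}S(\psi_{j,A})\le S\left(\frac1{|J|}\sum_{j=1}^{|J|}\psi_{j,A}\right)=S\big((\rho_J)_A\big).
\end{equation}
Because the right-hand side is identical for every basis, Theorem \ref{t:a} follows at once from the single bound
\begin{equation} \label{eq:rhoJ}
S\big((\rho_J)_A\big)=L\ln d-\Omega(1),
\end{equation}
which I would establish for $C<\Delta=O(\sqrt N)$. In this way the universality over bases is obtained for free, and the entire problem collapses onto the one reduced state $(\rho_J)_A$.

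The main obstacle is proving \eqref{eq:rhoJ}, and this is precisely where locality and the hypothesis $\Delta=O(\sqrt N)$ are indispensable. Note first that the first-moment estimate of Lemma \ref{l} is useless here: every $|\psi\rangle\in\mathcal H_J$ has $\langle\psi|H|\psi\rangle\in[E-\Delta,E+\Delta]$, so $\tr(\rho_J H)$ may be $o(\sqrt N)$ or even zero for a mid-spectrum band, making the bound $L\ln d-\Omega(L\tr^2(\rho_J H)/N^2)$ vacuous. The deviation must instead be extracted from the energy \emph{fluctuations} of the band together with the locality of $H$, which is exactly the content of the proof of Theorem \ref{thm:main} in Appendix \ref{app}; I would invoke those estimates, applied to the single constrained ensemble $\rho_J$, to obtain \eqref{eq:rhoJ}. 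The restriction $\Delta=O(\sqrt N)$ is genuinely necessary and not a proof artifact: as $\Delta$ grows the band fills the spectrum, $\rho_J\to I/d^N$, hence $(\rho_J)_A\to I/d_A$ and $S((\rho_J)_A)\to L\ln d$, so \eqref{eq:rhoJ} fails and \eqref{eq:concav} yields no deviation. By contrast, the eigenstate average in Theorem \ref{thm:main} can deviate even for wide bands because the slack in \eqref{eq:concav}---the Holevo quantity $S((\rho_J)_A)-\frac1{|J|}\sum_{j\in J}S(\Psi_{j,A})\ge0$ for the particular ensemble $\{1/|J|,\Psi_{j,A}\}$---can itself be $\Omega(1)$. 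This extra slack is specific to the eigenstate basis and is unavailable once we must control an arbitrary basis through \eqref{eq:concav} alone, which is why Theorem \ref{t:a} carries the additional requirement $\Delta=O(\sqrt N)$.
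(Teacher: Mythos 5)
Your first step is exactly the paper's: the projector onto $\mathcal H_J$ is basis-independent, so by concavity of the von Neumann entropy the average $\frac1{|J|}\sum_jS(\psi_{j,A})$ is bounded by $S(\rho_A)$ with $\rho:=\frac1{|J|}\sum_{j\in J}|\Psi_j\rangle\langle\Psi_j|$, and it suffices to show $S(\rho_A)\le L\ln d-\Omega(1)$. Your diagnosis of the role of $\Delta=O(\sqrt N)$ on this route (for wide bands $\rho\to I/d^N$, so the concavity bound becomes vacuous), and of why Theorem \ref{thm:main} nevertheless tolerates wide bands (an eigenstate average lets one discard eigenstates outside a width-$\sqrt N$ sub-band, which is impossible for an arbitrary basis of $\mathcal H_J$), is also sound.

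The gap is that you never prove the one statement everything was reduced to. Your plan is to obtain $S(\rho_A)\le L\ln d-\Omega(1)$ by ``invoking the estimates in the proof of Theorem \ref{thm:main} in Appendix \ref{app},'' but in the paper that proof contains no independent estimates: it is a two-line reduction of Theorem \ref{thm:main} \emph{to} Theorem \ref{t:a} (restrict to $\Delta\le\sqrt N$, using that a constant fraction of the band lies within $\sqrt N$ of $E$). So your citation is circular, and the entire analytic content of the theorem is missing. That content is: (i) Gaussian density-of-states estimates (Lemma 1 of Ref.~\cite{RCA23}) for $H$, $H_A$, and $H_{\bar A}$, giving a lower bound on $|J|/d^N$, a window $K$ of $H_A$-eigenvalues in $[C_1\sqrt L,(C_1+1)\sqrt L)$ carrying a constant fraction $m:=|K|/d_A=\Omega(1)$ of the $H_A$-spectrum, and an upper bound on the number of $H_{\bar A}$-eigenvalues within $\Delta+C_2$ of $E-\epsilon_k$ for $k\in K$; (ii) the locality-based tail bound (Theorem 2.3 of Ref.~\cite{AKL16}), which uses $\|H_\partial\|=O(1)$ to show that every $|\Psi_j\rangle$ with $j\in J$ has weight at most $c_2e^{-c_3C_2}$ on product sectors $|a_k\rangle\otimes|\bar a_l\rangle$ with $|\epsilon_k+\varepsilon_l-E|>\Delta+C_2$; (iii) the resulting constraint $\sum_{k\in K}p_k\le m/2$ on the diagonal entries $p_k=\langle a_k|\rho_A|a_k\rangle$, i.e., the high-$H_A$-energy window carries at most half its natural weight in $\rho_A$; and (iv) the constrained entropy maximization, via $S(\rho_A)\le-\sum_kp_k\ln p_k$ (Eq.~(1.42) of Ref.~\cite{Weh78}), which turns that weight deficit into a $\Theta(1)$ entropy deficit. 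Steps (ii)--(iii) are where locality and the hypotheses $L=\Omega(N)$, $|E|=o(\sqrt N)$, $C<\Delta=O(\sqrt N)$ are actually consumed; without them, the reduction in your first paragraph, while correct, proves nothing.
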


\begin{remark}
Corollary \ref{c:r} in Appendix \ref{app} extends Theorems \ref{thm:main} and \ref{t:a} to the R\'enyi entropy of $\Psi_{j,A}$ and $\psi_{j,A}$, respectively.
\end{remark}

\begin{corollary} \label{cor}
Let $|\psi\rangle$ be a pure state chosen uniformly at random with respect to the Haar measure from $\mathcal H_J$. Suppose $L=\Omega(N)$. There is a constant $C>0$ such that for any $|E|=o(\sqrt N)$ and for any $\Delta$ such that $C<\Delta=O(\sqrt N)$,
\begin{equation}
\Pr_{|\psi\rangle\in\mathcal H_J}\big(S(\psi_A)=L\ln d-\Omega(1)\big)=1-e^{-\Omega(d^N\Delta/N^{5/2})}.
\end{equation}
\end{corollary}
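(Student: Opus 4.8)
The plan is to deduce Corollary \ref{cor} from Theorem \ref{t:a} in two moves: first control the \emph{mean} $\E_{|\psi\rangle}S(\psi_A)$, and then upgrade the mean bound to a high-probability statement via concentration of measure on the sphere.

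For the mean, I would exploit the unitary invariance of the Haar measure on $\mathcal H_J$. Fix any orthonormal basis $\{|\psi_j\rangle\}_{j=1}^{|J|}$ of $\mathcal H_J$ and let $U$ be Haar-random on the unitary group of $\mathcal H_J$. Then $\{U|\psi_j\rangle\}$ is again an orthonormal basis of $\mathcal H_J$, and each $U|\psi_j\rangle$ is marginally Haar-distributed on the unit sphere of $\mathcal H_J$; hence $\E_U S\big((U\psi_j)_A\big)=\E_{|\psi\rangle}S(\psi_A)$ for every $j$. Averaging over $j$ and swapping expectations gives $\E_{|\psi\rangle}S(\psi_A)=\E_U\big[\tfrac1{|J|}\sum_j S((U\psi_j)_A)\big]$. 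Theorem \ref{t:a} applies to \emph{every} orthonormal basis with one and the same constant, so the bracketed basis average is at most $L\ln d-c$ for some fixed $c>0$ and every realization of $U$; taking $\E_U$ yields $\E_{|\psi\rangle}S(\psi_A)\le L\ln d-c=L\ln d-\Omega(1)$.

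Next I would invoke L\'evy's lemma, regarding $S(\psi_A)$ as a function on the unit sphere of $\mathcal H_J$, which has real dimension $2|J|-1$. Two inputs are needed. (i) A Lipschitz bound: the map $|\psi\rangle\mapsto S(\psi_A)$ is Lipschitz with constant $O(\ln d_A)=O(L\ln d)=O(N)$ (the apparent logarithmic divergence of the entropy near vanishing Schmidt coefficients is harmless because each such coefficient enters as $\sqrt{\lambda_k}$, which is $1$-Lipschitz as a singular value); this is the estimate underlying the concentration results of \cite{HLW06}. (ii) A lower bound on $|J|$: since the density of states is approximately Gaussian \cite{KLW15, BC15, RCA23} with mean zero and variance $s^2=\Theta(N)$ (Lemma \ref{l:vars}), and since $|E|=o(\sqrt N)$ lies in the bulk where the density is $\Theta(d^N/\sqrt N)$, a window of width $2\Delta$ with $C<\Delta=O(\sqrt N)$ contains $|J|=\Omega(d^N\Delta/\sqrt N)$ eigenvalues.

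Combining these, L\'evy's lemma gives, for any $\alpha>0$, $\Pr\big(S(\psi_A)\ge\E_{|\psi\rangle}S(\psi_A)+\alpha\big)\le\exp\big(-\Omega(|J|\alpha^2/N^2)\big)=\exp\big(-\Omega(d^N\Delta\alpha^2/N^{5/2})\big)$, where the factor $N^2$ (Lipschitz constant squared) and the factor $\sqrt N$ (from $|J|$) combine into the $N^{5/2}$ in the exponent. Choosing $\alpha=c/2$ and using the mean bound, with probability $1-e^{-\Omega(d^N\Delta/N^{5/2})}$ one gets $S(\psi_A)\le L\ln d-c/2=L\ln d-\Omega(1)$, which is the claim. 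The main obstacle I expect is quantitative: turning the ``approximately Gaussian'' density of states into a rigorous lower bound $|J|=\Omega(d^N\Delta/\sqrt N)$ valid uniformly over all $|E|=o(\sqrt N)$ and $C<\Delta=O(\sqrt N)$, since the precise power $N^{5/2}$ rests entirely on this count; the Lipschitz estimate is a close second, as its constant fixes the $N^2$ in the exponent.
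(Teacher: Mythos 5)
Your proposal is correct and follows essentially the same route as the paper: bound the Haar mean via Theorem \ref{t:a} (the paper does this implicitly, relying on the fact that the basis average is controlled through the basis-independent quantity $S(\rho_A)$), then apply L\'evy's lemma with the Lipschitz constant of Lemma III.2 in Ref.~\cite{HLW06} and the count $|J|=\Omega(d^N\Delta/\sqrt N)$. The one gap you flag as the ``main obstacle''---making the Gaussian density-of-states count rigorous uniformly over $|E|=o(\sqrt N)$ and $C<\Delta=O(\sqrt N)$---is exactly what the paper resolves by citing the Berry--Esseen-type bound of Lemma 1 in Ref.~\cite{RCA23}, with $C$ chosen large enough that the main term dominates the $O(d^N/\sqrt N)$ error.
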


We split the Hamiltonian (\ref{eq:LH}) into three parts: $H=H_A+H_{\bar A}+H_\partial$, where $H_{A(\bar A)}$ contains terms acting only on subsystem $A(\bar A)$, and $H_\partial$ contains boundary terms. For example, if subsystem $A$ consists of spins at sites $1,2,\ldots,L$, then
\begin{equation} \label{eq:hdef}
H_A=\sum_{i=1}^{L-1}H_i,\quad H_{\bar A}=\sum_{i=L+1}^{N-1}H_i,\quad H_\partial=H_L+H_N.
\end{equation}
Note that $H_A$ and $H_{\bar A}$ are Hermitian matrices of order $d_A$ and $d_{\bar A}$, respectively. Let
\begin{equation} \label{eq:def}
s_A^2:=\tr(H_A^2)/d_A,\quad s_{\bar A}^2:=\tr(H_{\bar A}^2)/d_{\bar A}.
\end{equation}

\begin{conjecture} \label{cuni}
Let $|\psi\rangle$ be a pure state chosen uniformly at random with respect to the Haar measure from $\mathcal H_J$. Let $f:=L/N$ be a fixed constant and $\delta$ denote the Kronecker delta. If
\begin{equation} \label{eq:cond}
\lim_{N\to\infty}s_A^2/s^2=f,
\end{equation}
then for $d=2$, any $|E|=o(\sqrt N)$, and any $\Delta=o(\sqrt N)$,
\begin{equation}
\lim_{N\to\infty}\left(L\ln2-\E_{|\psi\rangle\in\mathcal H_J}S(\psi_A)\right)=\frac{\delta_{f,1/2}-f-\ln(1-f)}2.
\end{equation}
\end{conjecture}

The proof of Lemma \ref{l:vars} implies that Eq.~(\ref{eq:cond}) holds for not only any translation-invariant but also many disordered systems. For example, Eq.~(\ref{eq:cond}) holds with overwhelming probability if each $H_i$ is a random Hermitian matrix of order $d^2$ sampled independently from the Gaussian unitary ensemble.

Conjecture \ref{cuni} can be heuristically justified in the same way as Eqs.~(14), (15) of Ref.~\cite{Hua21NPB}. Condition (\ref{eq:cond}) is not but should have been included in Ref.~\cite{Hua21NPB} because it is necessary for the heuristic justification there.

\section{Eigenstate thermalization} \label{seth}

The eigenstate thermalization hypothesis (ETH) states that for expectation values of local observables, a single eigenstate resembles a thermal state with the same energy \cite{Deu91, Sre94, RDO08}. An entropic variant of the ETH states that the entanglement entropy of an eigenstate between a contiguous subsystem (smaller than half the system size) and the rest of the system is approximately equal to the thermodynamic entropy of the subsystem at the same energy \cite{Deu10, SPR12, DLS13, DKPR16, GG18}.

For any local Hamiltonian (regardless of whether it is chaotic), the subsystem entropy of a (possibly mixed) state $\rho$ is, in the following sense, upper bounded by the thermodynamic entropy at the same energy. Let
\begin{equation}
\sigma_A(\beta):=e^{-\beta H_A}/\tr e^{-\beta H_A}
\end{equation}
be the thermal state of subsystem $A$ at inverse temperature $\beta$. Recall the definition of $\E_{|A|=L}$ in the paragraph preceding Lemma \ref{l}. Let
\begin{equation}
\mathcal E(\beta):=\E_{|A|=L}\tr(\sigma_A(\beta)H_A).
\end{equation}
It is easy to see that $\mathcal E(0)=0$ and that $\mathcal E$ is strictly monotonically decreasing. The effective inverse temperature of $\rho$ is determined by solving the equation
\begin{equation} \label{eq:eng}
\mathcal E(\beta)=(L-1)\tr(\rho H)/N.
\end{equation}
Note that ``$L-1$'' on the right-hand side is the number of terms in $H_A$. Let
\begin{equation}
\mathcal S(\beta):=\E_{|A|=L}S(\sigma_A(\beta)).
\end{equation}
It is easy to see that $\mathcal S(0)=L\ln d$ and that $\mathcal S$ is strictly monotonically increasing (decreasing) for negative (positive) $\beta$.

\begin{lemma} \label{l:thermo}
\begin{equation} \label{eq:thermo}
\E_{|A|=L}S(\rho_A)\le\mathcal S(\beta),
\end{equation}
where $\beta$ is the solution of Eq.~(\ref{eq:eng}).
\end{lemma}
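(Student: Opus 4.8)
The plan is to prove the bound one subsystem at a time via the Gibbs variational principle and then average, exploiting the fact that the subsystem-averaged energy of $\rho$ is exactly the quantity appearing in the definition (\ref{eq:eng}) of the effective inverse temperature $\beta$. The engine is the non-negativity of quantum relative entropy (Klein's inequality): for any state $\rho_A$ and any real $\beta$, $\tr(\rho_A\ln\rho_A)-\tr(\rho_A\ln\sigma_A(\beta))\ge0$. Substituting $\ln\sigma_A(\beta)=-\beta H_A-\ln\tr e^{-\beta H_A}$ and rearranging yields the single-subsystem inequality
\[
S(\rho_A)\le S(\sigma_A(\beta))+\beta\bigl(\tr(\rho_A H_A)-\tr(\sigma_A(\beta)H_A)\bigr),
\]
which holds for every $\beta$, with the correction term linear in the energy $\tr(\rho_A H_A)$. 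The point is that the same $\beta$ may be used simultaneously for all subsystems.

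Next I would average this inequality over all $N$ contiguous subsystems of length $L$. Since $H_A$ acts only on $A$, we have $\tr(\rho_A H_A)=\tr(\rho H_A)$, so the averaged right-hand side becomes $\mathcal S(\beta)+\beta\bigl(\E_{|A|=L}\tr(\rho H_A)-\mathcal E(\beta)\bigr)$. The key computation is to evaluate $\E_{|A|=L}\tr(\rho H_A)$ by a counting argument. A length-$L$ subsystem starting at position $a$ covers $\{a,\dots,a+L-1\}$, and it contains the term $H_i$ (acting on sites $i,i+1$) precisely when $i-L+2\le a\le i$, i.e.\ for exactly $L-1$ of the $N$ choices of $a$. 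Hence
\[
\E_{|A|=L}\tr(\rho H_A)=\frac1N\sum_{i=1}^N(L-1)\tr(\rho H_i)=\frac{L-1}N\tr(\rho H),
\]
which is exactly the right-hand side of the energy equation (\ref{eq:eng}). Therefore, choosing $\beta$ to be the solution of (\ref{eq:eng}) makes the bracketed correction vanish identically, leaving $\E_{|A|=L}S(\rho_A)\le\mathcal S(\beta)$, as claimed. The argument requires nothing about $\rho$ being pure or an eigenstate, so it applies to any (possibly mixed) state.

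The main subtlety—and essentially the only one—is that the Gibbs bound is applied with a single global $\beta$ rather than a subsystem-dependent temperature matching each individual energy. The correction term does not vanish subsystem by subsystem, only after averaging. This is precisely why the definitions of $\mathcal E(\beta)$ and of the effective temperature (\ref{eq:eng}) are phrased in terms of subsystem-averaged energies: the linearity of the correction in $\tr(\rho_A H_A)$ lets the average commute through, and the counting identity ensures the averaged energy aligns with the $(L-1)/N$ prefactor. I would also note, as already recorded in the text preceding the lemma, that $\mathcal E$ is strictly monotonically decreasing with $\mathcal E(0)=0$; this guarantees that the solution $\beta$ of (\ref{eq:eng}) exists and is unique, so that $\mathcal S(\beta)$ is well-defined and the statement is unambiguous.
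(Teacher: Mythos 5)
Your proof is correct and takes essentially the same route as the paper: the paper's own proof consists of exactly your counting identity $\E_{|A|=L}\tr(\rho_AH_A)=(L-1)\tr(\rho H)/N$ plus a citation to Lemma 11 of Refs.~\cite{Hua21ISIT,Hua22TIT}, which is the averaged maximum-entropy (Gibbs variational) principle that you instead prove inline via Klein's inequality. Your version is merely self-contained, making explicit why a single global $\beta$ solving Eq.~(\ref{eq:eng}) suffices---the linear correction term is what cancels after averaging.
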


\begin{IEEEproof}
It follows from Lemma 11 in Ref.~\cite{Hua21ISIT} or in Ref.~\cite{Hua22TIT} and the observation that
\begin{multline}
\E_{|A|=L}\tr(\rho_AH_A)=\tr\left(\rho\E_{|A|=L}H_A\otimes\mathds1_{\bar A}\right)\\
=(L-1)\tr(\rho H)/N,
\end{multline}
where $\mathds1_{\bar A}$ is the identity operator on subsystem $\bar A$.
\end{IEEEproof}

It is straightforward to extend Lemma \ref{l:thermo} to two and higher spatial dimensions.

\begin{question} \label{q:2}
For eigenstates of chaotic local Hamiltonians, does the difference between the left-hand side (entanglement entropy) and the right-hand side (thermodynamic entropy) of (\ref{eq:thermo}) vanish in the thermodynamic limit?
\end{question}

Suppose $L=\Omega(N)$. We answer Question \ref{q:2} for mid-spectrum eigenstates, i.e., $|\Psi_j\rangle$ with $|E_j|=o(\sqrt N)$. In this case, the right-hand side of (\ref{eq:thermo}) can be rigorously calculated: Lemma \ref{l:et} in Appendix \ref{s:l} implies that the solution of Eq.~(\ref{eq:eng}) is
\begin{equation}
\beta=-\Theta(E_j/N).
\end{equation}
Then, Lemma \ref{l:ent} in Appendix \ref{s:l} implies that
\begin{equation} \label{eq:thent}
\mathcal S(\beta)=L\ln d-\Theta(LE_j^2/N^2)=L\ln d-o(1).
\end{equation}
Thus, the bound in Lemma \ref{l:thermo} is the same as that in Lemma \ref{l}. Comparing Eqs.~(\ref{eq:main}) and (\ref{eq:thent}), the answer to Question \ref{q:2} for mid-spectrum eigenstates is no.

\appendices

\section{Proofs for Section \ref{sres}} \label{app}

\begin{IEEEproof} [Proof of Theorem \ref{thm:main}]
Since the distribution of $E_j$'s is approximately normal \cite{KLW15, RCA24, BC15} with mean zero and variance $s^2$ (\ref{eq:vars}),
\begin{align}
&1\ge\big|\{j:|E_j-E|\le\sqrt N\}\big|/|J|\nonumber\\
&\ge\big|\{j:|E_j-E|\le\sqrt N\}\big|/d^N=\Omega(1),\quad\forall\Delta>\sqrt N.
\end{align}
It suffices to prove Theorem \ref{thm:main} with the additional constraint that $\Delta\le\sqrt N$. This is a special case of Theorem \ref{t:a}.
\end{IEEEproof}

Before proving Theorem \ref{t:a}, it is instructive to verify Eq.~(\ref{eq:a}) in a simple example.

\begin{example}
Consider a chain of $N$ qubits (spin-$1/2$'s) governed by the Hamiltonian
\begin{equation} \label{eq:dh}
H=\sum_{i=1}^N\sigma_i^z,
\end{equation}
where $\sigma_i^z$ is the Pauli $z$ matrix at site $i$. Suppose $N$ is even. Let $E=0$ and $\Delta=1$ so that $|J|=\binom N{N/2}$. Let $\Pi=\Pi^2$ be the projection onto the subspace $\mathcal H_J=\ker H$ so that
\begin{multline}
\rho_A:=\frac1{|J|}\sum_{j=1}^{|J|}\psi_{j,A}=\frac1{|J|}\tr_{\bar A}\Pi\\
={\binom N{N/2}}^{-1}\bigoplus_{k=0}^L\binom{N-L}{N/2-k}I_{\binom Lk}
\end{multline}
in the computational basis, where $I_n$ is the identity matrix of order $n$. For $N/2\ge L=\Omega(N)$,
\begin{multline}
\frac1{|J|}\sum_{j=1}^{|J|}S(\psi_{j,A})\le S(\rho_A)=\sum_{k=0}^L\frac{\binom Lk\binom{N-L}{N/2-k}}{\binom N{N/2}}\ln\frac{\binom N{N/2}}{\binom{N-L}{N/2-k}}\\
=L\ln2-\Theta(1).
\end{multline}
\end{example}

(\ref{eq:dh}) is a very special Hamiltonian. For general $H$, $\rho_A$ cannot be calculated analytically. However, we can still prove that many diagonal elements of $\rho_A$ (in a particular basis) deviate significantly from $1/d_A$. Thus, $S(\rho_A)$ deviates significantly from the maximum entropy.

\begin{IEEEproof} [Proof of Theorem \ref{t:a}]
Recall the definitions of $s_A^2,s_{\bar A}^2$ (\ref{eq:def}). Similar to Lemma \ref{l:vars},
\begin{equation} \label{eq:var}
s_A^2=\Theta(L),\quad s_{\bar A}^2=\Theta(N-L).
\end{equation}
Let $\{|a_k\rangle\}_{k=1}^{d_A}$ and $\{|\bar a_l\rangle\}_{l=1}^{d_{\bar A}}$ be complete sets of eigenstates of $H_A$ and $H_{\bar A}$ with corresponding eigenvalues $\{\epsilon_k\}$ and $\{\varepsilon_l\}$, respectively. Assume without loss of generality that $E\le0$. Let
\begin{gather}
K:=\{k:C_1\sqrt L\le\epsilon_k<(C_1+1)\sqrt L\},\\
\Lambda:=\Delta+C_2,\quad Q_x:=\sum_{l:~|\varepsilon_l-x|\le\Lambda}|\bar a_l\rangle\langle\bar a_l|,\\
P:=\sum_{k=1}^{d_A}|a_k\rangle\langle a_k|\otimes(\mathds1_{\bar A}-Q_{E-\epsilon_k}),\\
P_K:=\sum_{k\in K}|a_k\rangle\langle a_k|\otimes(\mathds1_{\bar A}-Q_{E-\epsilon_k}),
\end{gather}
where $C_1,C_2>0$ are constants to be chosen later. The distributions of $E_j$'s, of $\epsilon_k$'s, and of $\varepsilon_l$'s are all approximately normal \cite{KLW15, RCA24, BC15}. Specifically, Lemma 1 in Ref.~\cite{RCA24} implies that
\begin{gather}
|J|/d^N\ge T_1-R_1,\quad m:=|K|/d_A\ge T_2-R_2,\\
\tr Q_{E-\epsilon_k}/d_{\bar A}\le T_3+R_3,\quad\forall k\in K,
\end{gather}
where
\begin{gather}
T_1=e^{-\frac{(E-\Delta)^2}{2s^2}}\frac\Delta{s}\sqrt{\frac2\pi},\quad T_2=e^{-\frac{(C_1+1)^2L}{2s_A^2}}\frac{\sqrt L}{s_A\sqrt{2\pi}},\\
T_3=e^{-\frac{(\max\{0,C_1\sqrt L-E-\Lambda\})^2}{2s_{\bar A}^2}}\frac\Lambda{s_{\bar A}}\sqrt{\frac2\pi},\\
R_1=c_1/\sqrt N,\quad R_2=c_1/\sqrt L,\quad R_3=c_1/\sqrt{N-L}
\end{gather}
for some constant $c_1>0$. Let
\begin{gather}
\rho:=\frac1{|J|}\sum_{j\in J}|\Psi_j\rangle\langle\Psi_j|=\frac1{|J|}\sum_{j=1}^{|J|}|\psi_j\rangle\langle\psi_j|,\\
\rho_A:=\tr_{\bar A}\rho=\frac1{|J|}\sum_{j=1}^{|J|}\psi_{j,A}.
\end{gather}
Recall the definition of $H_\partial$ (\ref{eq:hdef}). Since $\|H_\partial\|=O(1)$, Theorem 2.3 in Ref.~\cite{AKL16} implies that
\begin{equation}
\langle\Psi_j|P|\Psi_j\rangle=\||\Psi_j\rangle\langle\Psi_j|P\|^2\le c_2e^{c_3(\Delta-\Lambda)}=c_2e^{-c_3C_2}
\end{equation}
for all $j\in J$, where $c_2,c_3>0$ are constants. Hence,
\begin{equation}
\tr(\rho P)\le c_2e^{-c_3C_2}.
\end{equation}
For $x,y>0$, we write $x\ll y$ if $x/y$ is upper bounded by a sufficiently small positive constant. For a sufficiently large constant $C\gg1$,
\begin{equation}
R_1<R_3\ll T_1.
\end{equation}
Let $C_1,C_2$ be constants such that
\begin{equation}
\frac{\Delta^2}L+\frac NL\ln\left(2+\frac{C_2}C\right)\ll C_1^2\ll C_2.
\end{equation}
Since $R_2/T_2=O(1/\sqrt L)$ and $(C_1+1)^2\ll C_2$,
\begin{equation}
\frac{\tr(\rho P)}m\le\frac{c_2e^{-c_3C_2}}{T_2-R_2}=c_2s_Ae^{\frac{(C_1+1)^2L}{2s_A^2}-c_3C_2+o(1)}\sqrt{\frac{2\pi}L}\ll1.
\end{equation}
Since $0\le-E=o(\sqrt N)$ and $\Lambda/\Delta\le1+C_2/C$,
\begin{multline}
\frac{T_3}{T_1}=e^{\frac{(E-\Delta)^2}{2s^2}-\frac{(\max\{0,C_1\sqrt L-E-\Lambda\})^2}{2s_{\bar A}^2}}\frac{s\Lambda}{s_{\bar A}\Delta}\\
\le e^{\frac{\Delta^2}{2s^2}-\frac{(C_1\sqrt L-\Delta)^2}{2s_{\bar A}^2}+o(1)}\frac{s}{s_{\bar A}}\left(1+\frac{C_2}C\right)\ll1
\end{multline}
so that
\begin{equation}
\frac{d_A\tr Q_{E-\epsilon_k}}{|J|}\le\frac{T_3+R_3}{T_1-R_1}\ll1,\quad\forall k\in K.
\end{equation}
Let $p_k:=\langle a_k|\rho_A|a_k\rangle$ so that $\sum_{k=1}^{d_A}p_k=1$. Since $\rho\le I/|J|$,
\begin{align} \label{eq:cst}
\sum_{k\in K}p_k&=\sum_{k\in K}\langle a_k|\rho_A|a_k\rangle=\sum_{k\in K}\tr(\rho(|a_k\rangle\langle a_k|\otimes\mathds1_{\bar A}))\nonumber\\
&=\tr(\rho P_K)+\sum_{k\in K}\tr(\rho(|a_k\rangle\langle a_k|\otimes Q_{E-\epsilon_k}))\nonumber\\
&\le\tr(\rho P)+\sum_{k\in K}\frac{\tr Q_{E-\epsilon_k}}{|J|}\le\frac m4+\frac{|K|}{4d_A}=\frac m2.
\end{align}
Using the concavity of the von Neumann entropy and Eq.~(1.42) of Ref.~\cite{Weh78},
\begin{equation} \label{eq:concon}
\frac1{|J|}\sum_{j=1}^{|J|}S(\psi_{j,A})\le S(\rho_A)\le-\sum_{k=1}^{d_A}p_k\ln p_k.
\end{equation}
Maximizing the rightmost side subject to the constraint (\ref{eq:cst}) and since $m\ge T_2-R_2=\Theta(1)$,
\begin{multline}
-\sum_{k=1}^{d_A}p_k\ln p_k\le\frac{m\ln(2d_A)}2+\left(1-\frac m2\right)\ln\frac{d_A(1-m)}{1-m/2}\\
=L\ln d+\frac{m\ln2}2+\left(1-\frac m2\right)\ln\frac{1-m}{1-m/2}=L\ln d-\Theta(1).
\end{multline}
\end{IEEEproof}

\begin{definition} [R\'enyi entropy]
The R\'enyi entropy $S_\alpha$ of order $\alpha\in(0,1)\cup(1,\infty)$ of a state $\rho$ is defined as
\begin{equation}
S_\alpha(\rho)=\frac1{1-\alpha}\ln\tr(\rho^\alpha).
\end{equation}
\end{definition}

The von Neumann entropy can be formally written as $S_1$ since
\begin{equation}
S(\rho)=\lim_{\alpha\to1}S_\alpha(\rho).
\end{equation}

\begin{corollary} \label{c:r}
For any constant $\alpha>0$, Theorems \ref{thm:main} and \ref{t:a} remain valid upon replacing $S(\Psi_{j,A})$ and $S(\psi_{j,A})$ in Eqs.~(\ref{eq:main}) and (\ref{eq:a}) by $S_\alpha(\Psi_{j,A})$ and $S_\alpha(\psi_{j,A})$, respectively.
\end{corollary}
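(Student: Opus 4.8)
The plan is to reuse the proof of Theorem \ref{t:a} essentially verbatim and to modify only its final entropic estimate. The operator-theoretic part of that proof---the construction of $P$, $P_K$, $Q_x$, the concentration bounds from Refs.~\cite{RCA23, AKL16}, and the resulting constraint $\sum_{k\in K}p_k\le m/2$ in Eq.~(\ref{eq:cst})---makes no reference to the von Neumann entropy and carries over unchanged. It therefore suffices to bound $\frac1{|J|}\sum_{j}S_\alpha(\psi_{j,A})$ using only the constraint (\ref{eq:cst}) together with $m=\Theta(1)$. I would split the argument according to whether $\alpha>1$ or $\alpha\in(0,1)$, since $\alpha=1$ is exactly Theorem \ref{t:a}.

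For $\alpha>1$ the bound is immediate. Because $S_\beta(\rho)$ is non-increasing in $\beta$ and $S_1=S$, one has $S_\alpha(\psi_{j,A})\le S(\psi_{j,A})$ for every $j$; averaging and invoking Theorem \ref{t:a} gives $\frac1{|J|}\sum_{j}S_\alpha(\psi_{j,A})\le\frac1{|J|}\sum_{j}S(\psi_{j,A})=L\ln d-\Omega(1)$, which is the claim.

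For $\alpha\in(0,1)$ I would replace the two inequalities of Eq.~(\ref{eq:concon}) by their R\'enyi analogues. First, since $t\mapsto t^\alpha$ is concave for $\alpha\in(0,1)$, the map $\rho\mapsto\tr(\rho^\alpha)$ is concave, and composing with the increasing concave map $x\mapsto(1-\alpha)^{-1}\ln x$ shows that $S_\alpha$ is concave; hence $\frac1{|J|}\sum_{j}S_\alpha(\psi_{j,A})\le S_\alpha(\rho_A)$. Second, the vector of diagonal entries $\{p_k\}$ of $\rho_A$ in the eigenbasis $\{|a_k\rangle\}$ of $H_A$ is majorized by the spectrum of $\rho_A$, and the classical R\'enyi entropy is Schur-concave for $\alpha\in(0,1)$, so $S_\alpha(\rho_A)\le(1-\alpha)^{-1}\ln\sum_{k=1}^{d_A}p_k^\alpha$. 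This puts the problem in exactly the form of the main proof: maximize a Schur-concave functional of $\{p_k\}$ subject to (\ref{eq:cst}).

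The final step is the constrained maximization, which I expect to be the main work. Since $x\mapsto x^\alpha$ is concave, $\sum_k p_k^\alpha$ is maximized by a distribution that is uniform on $K$ and uniform on its complement, and the constraint binds at $\sum_{k\in K}p_k=m/2$ (the unconstrained optimum would place the larger mass $m$ on $K$, which (\ref{eq:cst}) forbids). This yields $(1-\alpha)^{-1}\ln\sum_k p_k^\alpha\le L\ln d+(1-\alpha)^{-1}\ln\big(m2^{-\alpha}+(1-m)^{1-\alpha}(1-m/2)^\alpha\big)$. The crux is to check that the bracketed quantity is strictly below its unconstrained value $1$---equivalently, that forcing mass $m/2<m$ onto $K$ strictly lowers $\sum_k p_k^\alpha$ below $d_A^{1-\alpha}$---and, crucially, that the resulting gap is a positive constant independent of $N$; both follow from strict concavity once one uses that $m=\Theta(1)$ is bounded away from $0$ and $1$. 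This gives $\frac1{|J|}\sum_{j}S_\alpha(\psi_{j,A})=L\ln d-\Theta(1)$, establishing the R\'enyi version of Theorem \ref{t:a}. Finally, the corresponding statement for $\frac1{|J|}\sum_{j\in J}S_\alpha(\Psi_{j,A})$ (the R\'enyi version of Theorem \ref{thm:main}) follows from the $\Delta>\sqrt N$ reduction used in the proof of Theorem \ref{thm:main}: restricting to the sub-ensemble $J'=\{j:|E_j-E|\le\sqrt N\}$, which satisfies $|J'|/|J|=\Omega(1)$, the bound on $J'$ together with $S_\alpha\le L\ln d$ for the remaining indices gives $L\ln d-\Omega(1)$ on all of $J$.
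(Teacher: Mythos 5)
Your proposal is correct and follows essentially the same route as the paper's proof: reduce $\alpha\ge1$ to the von Neumann case by monotonicity of $S_\alpha$ in $\alpha$, and for $0<\alpha<1$ replace the two inequalities in (\ref{eq:concon}) by concavity of the R\'enyi entropy (which the paper cites from Ref.~\cite{Ras11} and you prove directly) and the majorization/Schur-concavity bound (which is exactly Eq.~(1.42) of Ref.~\cite{Weh78}), then carry out the identical constrained maximization, arriving at the same expression $m/2^\alpha+(1-m/2)^\alpha(1-m)^{1-\alpha}$ and the same $L\ln d-\Theta(1)$ conclusion. Your explicit treatment of the $\Delta>\sqrt N$ reduction for the Theorem \ref{thm:main} version also matches the paper's argument, so there is nothing to correct.
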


\begin{IEEEproof}
Since $S_\alpha$ is monotonically non-increasing in $\alpha$, it suffices to prove the case where $0<\alpha<1$. We follow the proof of Theorem \ref{t:a}. Similar to (\ref{eq:concon}), using the concavity of the R\'enyi entropy for $0<\alpha<1$ \cite{Ras11} and Eq.~(1.42) of Ref.~\cite{Weh78},
\begin{equation}
\frac1{|J|}\sum_{j=1}^{|J|}S_\alpha(\psi_{j,A})\le S_\alpha(\rho_A)\le\frac1{1-\alpha}\ln\sum_{k=1}^{d_A}p_k^\alpha.
\end{equation}
Maximizing the rightmost side subject to the constraint (\ref{eq:cst}),
\begin{align}
&\frac1{1-\alpha}\ln\sum_{k=1}^{d_A}p_k^\alpha\nonumber\\
&\le\frac1{1-\alpha}\ln\big(md_A^{1-\alpha}/2^\alpha+(1-m/2)^\alpha(1-m)^{1-\alpha}d_A^{1-\alpha}\big)\nonumber\\
&=L\ln d+\frac1{1-\alpha}\ln\big(m/2^\alpha+(1-m/2)^\alpha(1-m)^{1-\alpha}\big)\nonumber\\
&=L\ln d-\Theta(1).
\end{align}
\end{IEEEproof}

\begin{IEEEproof} [Proof of Corollary \ref{cor}]
Let $C$ be a sufficiently large constant. Since $C<\Delta=O(\sqrt N)$, Lemma 1 in Ref.~\cite{RCA24} implies that
\begin{equation}
|J|=\Omega(d^N\Delta/\sqrt N).
\end{equation}
Since $|\psi\rangle$ is a Haar-random state in $\mathcal H_J$, Theorem \ref{t:a} implies that
\begin{equation}
\E_{|\psi\rangle\in\mathcal H_J}S(\psi_A)=L\ln d-\Omega(1).
\end{equation}
\begin{lemma} [Lemma III.2 in Ref.~\cite{HLW06}]
For any two pure states $|\phi_1\rangle,|\phi_2\rangle$,
\begin{gather}
|S(\phi_{1,A})-S(\phi_{2,A})|\le\eta\||\phi_1\rangle-|\phi_2\rangle\|,\\
\phi_{i,A}:=\tr_{\bar A}|\phi_i\rangle\langle\phi_i|,\quad i=1,2,
\end{gather}
where the Lipschitz ``constant'' is upper bounded by $\eta=O(\ln d_A)$.
\end{lemma}
Levy's lemma (Lemma III.1 in Ref.~\cite{HLW06}) implies that
\begin{multline}
\Pr_{|\psi\rangle\in\mathcal H_J}\big(S(\psi_A)=L\ln d-\Omega(1)\big)\ge1-e^{-\Omega(|J|/\eta^2)}\\
=1-e^{-\Omega(d^N\Delta/N^{5/2})}.
\end{multline}
\end{IEEEproof}

\section{Lemmas for Section \ref{seth}} \label{s:l}

\begin{lemma} [moments \cite{Ans16, HBZ19, Hua22AP}] \label{l:m}
\begin{gather}
|\tr(H_A^3)|/d_A=O(L),\\
|\tr(H_A^r)|/d_A=(O(Lr))^{r/2},\quad\forall r\in\mathbb N^+,
\end{gather}
where the constant hidden in the Big-O notation is independent of $r$.
\end{lemma}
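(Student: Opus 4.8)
The plan is to bound the moments through the standard multi-index expansion
\begin{equation}
\tr(H_A^r)/d_A=\sum_{i_1,\dots,i_r=1}^{L-1}\tr(H_{i_1}H_{i_2}\cdots H_{i_r})/d_A,
\end{equation}
and to show that, although there are $(L-1)^r$ terms, only a tightly controlled number of them are nonzero. First I would record the crude per-term bound $|\tr(H_{i_1}\cdots H_{i_r})/d_A|\le\prod_t\|H_{i_t}\|\le h^r$ with $h:=\max_i\|H_i\|=\Theta(1)$, so the whole problem reduces to counting the sequences $(i_1,\dots,i_r)$ that give a nonzero trace. The guiding principle is that $H_A$ should have Gaussian-type moments with variance proxy $s_A^2=\Theta(L)$ (cf.\ Eq.~(\ref{eq:var})), so the target $O(Lr)^{r/2}$ is really the statement $L^{r/2}(r-1)!!$, the $r$-th moment of a centered Gaussian.

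The structural input is locality together with $\tr H_i=0$. For a fixed sequence, let the distinct positions appearing among $i_1,\dots,i_r$ split into maximal runs of consecutive integers; call these runs \emph{clusters}. Operators in different clusters act on disjoint spins, so the normalized trace factorizes across clusters. If a cluster consists of a single edge occurring exactly once, its factor is $\tr(H_i)/d^2=0$; hence every cluster must contain at least two of the $r$ factors, so a nonzero term has at most $b\le\lfloor r/2\rfloor$ clusters. Since the clusters are disjoint intervals on the chain, they are fixed by their distinct left endpoints (plus their sizes), and the number of endpoint choices is at most $\binom{L-1}{b}=O(L^{r/2})$. This is precisely what produces the factor $L^{r/2}$ in the claimed bound.

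It then remains to count, for a fixed set of clusters, the orderings (interleavings) of the $r$ factors among the clusters and within each cluster. The dominant case is the \emph{pairing} case, in which $r$ is even and the clusters are $r/2$ single edges each used twice; the number of such interleavings is the number of perfect matchings $(r-1)!!=O(r^{r/2})$. Multiplying the $O(L^{r/2})$ position count by this $O(r^{r/2})$ ordering count (and absorbing $h^r$ and the $C^r$ from summing over cluster shapes into a single constant depending only on $d$ and $\max_i\|H_i\|$) gives $O(Lr)^{r/2}$ with a constant independent of $r$. The hard part is to show that clusters containing many edges do not over-contribute: the naive count $c^{r_t}$ for the internal orderings of a size-$c$ cluster is too lossy and would inflate $r^{r/2}$ to $r^{r}$. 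Controlling this is where I would invoke the sharper consequence of the assumptions, namely the vanishing right partial trace $\tr_{i+1}H_i=0$ (no term of $H_i$ acts on site $i$ alone, and $H_i$ is traceless). Processing the chain from right to left, this forces the factors in each cluster into a balanced, Dyck/non-crossing arrangement whose count is only exponential in $r_t$, so the overall ordering count stays at the pairing level $O(r^{r/2})$; this is the technical heart, established in the cited works \cite{Ans16, HBZ19, Hua22AP}.

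Finally, the sharper cubic estimate $|\tr(H_A^3)|/d_A=O(L)$ follows directly from the same picture and I would prove it separately, since the general bound only yields the weaker $O(L^{3/2})$ at $r=3$. For $r=3$ the inequality $b\le\lfloor r/2\rfloor$ gives $b=1$, so every nonzero term lives in a single cluster and hence has only one free position along the chain, contributing a single power of $L$. For each position there are at most $O(1)$ admissible orderings (one or two distinct edges; a three-edge run is killed because its rightmost edge would appear only once, so $\tr_{i+1}H_i=0$ annihilates it), each of value $O(1)$. Summing $O(1)$ over the $O(L)$ positions gives $O(L)$, which is the estimate needed for the effective-temperature and entropy computations in Section~\ref{seth}.
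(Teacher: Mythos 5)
The first thing to note is that the paper does not prove Lemma \ref{l:m} at all: it is imported wholesale from \cite{Ans16, HBZ19, Hua22AP}, and no argument appears in Appendix \ref{s:l}. So there is no in-paper proof to compare yours against, and your proposal must be judged as a reconstruction of the cited works. Its self-contained parts are correct: the multi-index expansion with the per-term bound $h^r$, $h=\max_i\|H_i\|$; the factorization of the normalized trace over maximal consecutive clusters (operators in distinct clusters commute and have disjoint support); the observation that a cluster carrying a single factor vanishes by tracelessness, so a nonzero term has $b\le\lfloor r/2\rfloor$ clusters and the endpoint count contributes $L^b\le L^{r/2}$; the identity $\tr_{i+1}H_i=0$, which is exactly what the paper's assumptions (no terms of $H_i$ supported on site $i$ alone, plus $\tr H_i=0$) provide; and the separate $r=3$ argument, which is indeed required because the general bound only gives $O(L^{3/2})$ there. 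You are also right that two-edge single-cluster terms such as $\tr(H_iH_{i+1}^2)$ can be nonzero at $r=3$; they contribute $O(1)$ each over $O(L)$ positions, consistent with your count.

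The gap—which you flag yourself—is that the decisive step, keeping the interleaving count of large clusters at the pairing level $c^r r^{r/2}$ rather than $r^r$, is attributed to the same citations rather than proved. As a standalone proof of the stated bound $O(Lr)^{r/2}$ with an $r$-independent constant, your proposal is therefore incomplete; it has the same epistemic status as the paper's bare citation, with the surrounding architecture filled in. One constructive remark: your self-contained steps already yield the weaker bound $|\tr(H_A^r)|/d_A\le(cLr^2)^{r/2}$ (at most $L^{r/2}$ endpoint choices, at most $2^r$ cluster shapes, at most $r^r$ sequences over the chosen edges, $h^r$ per term), and this weaker bound suffices for every use of Lemma \ref{l:m} in this paper. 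Indeed, in Lemmas \ref{l:et} and \ref{l:ent} one has $|\beta|=o(1/\sqrt L)$, so the series terms become $\bigl(|\beta|\sqrt{cL}\bigr)^r r^r/r!\le\bigl(e|\beta|\sqrt{cL}\bigr)^r$, which still sum geometrically and are dominated by the lowest-order term, while the cubic bound $O(L)$ is recovered because $b=1$ when $r=3$. Had you stated and proved that weaker version, you would have a complete, elementary argument adequate for the paper; the sharp $(Lr)^{r/2}$ genuinely requires the refined combinatorics of \cite{Ans16, HBZ19, Hua22AP}.
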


\begin{lemma} \label{l:et}
For $|\beta|=o(1/\sqrt L)$,
\begin{equation}
\mathcal E(\beta)=-\beta(L-1)s^2/N\pm O(\beta^2L+|\beta|^3L^2).
\end{equation}
\end{lemma}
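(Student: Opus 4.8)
The plan is to handle each contiguous subsystem $A$ separately and then average. For fixed $A$, write the thermal expectation as $\tr(\sigma_A(\beta)H_A)=\langle H_A\rangle_\beta:=\tr(H_Ae^{-\beta H_A})/\tr(e^{-\beta H_A})$ and expand it as a ratio of power series in $\beta$ using the normalized moments $\mu_r:=\tr(H_A^r)/d_A$. Concretely, $\langle H_A\rangle_\beta=\mathcal N(\beta)/\mathcal D(\beta)$ with $\mathcal N(\beta)=\sum_{r\ge0}\frac{(-\beta)^r}{r!}\mu_{r+1}$ and $\mathcal D(\beta)=\sum_{r\ge0}\frac{(-\beta)^r}{r!}\mu_r$. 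The assumptions $\tr H_i=0$ and Eq.~(\ref{eq:neq}) give $\mu_0=1$, $\mu_1=0$, and $\mu_2=s_A^2$, so the expansion begins $\mathcal N(\beta)=-\beta s_A^2+\tfrac{\beta^2}{2}\mu_3+\cdots$ and $\mathcal D(\beta)=1+\tfrac{\beta^2}{2}s_A^2+\cdots$, whence $\langle H_A\rangle_\beta=-\beta s_A^2+\tfrac{\beta^2}{2}\mu_3+O(\beta^3(s_A^2)^2)+\cdots$.

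Next I would isolate the linear term and show that averaging over subsystems reproduces the claimed coefficient. Since $s_A^2=\sum_i\tr(H_i^2)/d_A$ (again by Eq.~(\ref{eq:neq})), and since among the $N$ contiguous windows of length $L$ each bond lies in exactly $L-1$ of them, a simple counting argument gives $\E_{|A|=L}s_A^2=\tfrac{L-1}{N}\sum_i\tr(H_i^2)/d^N=(L-1)s^2/N$ by Lemma~\ref{l:vars}. Thus the $O(\beta)$ part of $\mathcal E(\beta)=\E_{|A|=L}\langle H_A\rangle_\beta$ is exactly $-\beta(L-1)s^2/N$.

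The remaining task is to bound the higher-order terms uniformly in $A$ and show they contribute only $O(\beta^2L+\beta^3L^2)$. Here I invoke Lemma~\ref{l:m}: the sharp bound $|\mu_3|=O(L)$ controls the $\beta^2$ term as $O(\beta^2L)$, while the $\beta^3$ contributions $-\tfrac{\beta^3}{6}\mu_4+\tfrac{\beta^3}{2}(s_A^2)^2$ are $O(\beta^3L^2)$ because $|\mu_4|=O(L^2)$ and $s_A^2=\Theta(L)$. For the tail $r\ge4$, the general bound $|\mu_r|=O(Lr)^{r/2}$ together with $r!\ge(r/e)^r$ yields $\frac{|\beta|^r}{r!}|\mu_r|\le(e\sqrt{CL}\,|\beta|)^r$; the hypothesis $|\beta|=o(1/\sqrt L)$ makes $\eta:=e\sqrt{CL}\,|\beta|=o(1)$, so $\sum_{r\ge4}$ sums to $O(\eta^4)=O(\beta^4L^2)=o(\beta^3L^2)$, and the same smallness forces $\mathcal D(\beta)\ge1/2$ for large $N$ so the ratio is well controlled. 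Because these moment bounds hold uniformly for any length-$L$ chain, the error is uniform in $A$ and survives the average, giving $\mathcal E(\beta)=-\beta(L-1)s^2/N\pm O(\beta^2L+\beta^3L^2)$.

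The main obstacle I anticipate is the rigorous remainder control rather than the leading-order bookkeeping: one must combine the sharp third-moment estimate with the factorial-type bound for all higher moments, verify that $|\beta|=o(1/\sqrt L)$ simultaneously guarantees convergence of the series and keeps the partition function bounded away from zero, and confirm that the constants in Lemma~\ref{l:m} are genuinely uniform over all contiguous subsystems so that the $O(\cdot)$ error is unaffected by the averaging $\E_{|A|=L}$.
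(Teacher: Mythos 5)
Your proposal is correct and follows essentially the same route as the paper's proof: both expand $\tr(H_Ae^{-\beta H_A})/\tr(e^{-\beta H_A})$ as a ratio of moment series, invoke Lemma \ref{l:m} to control the third moment and the factorial-type tail, and use the bond-counting identity $\E_{|A|=L}s_A^2=(L-1)s^2/N$ to obtain the leading coefficient. The only differences are cosmetic, e.g., the paper lower-bounds the partition function by $1$ (via convexity and $\tr H_A=0$) rather than by $1/2$, and absorbs the $\mu_4$ and cross terms into a single tail estimate.
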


\begin{IEEEproof}
Using Lemma \ref{l:m},
\begin{align}
&1\le\frac{\tr e^{-\beta H_A}}{d_A}=\sum_{r=0}^\infty\frac{(-\beta)^r\tr(H_A^r)}{r!d_A}\nonumber\\
&\le1+\sum_{r=2}^\infty\frac{|\beta|^r(O(Lr))^{r/2}}{r!}=1+O(\beta^2L),\\
&\left|\sum_{r=2}^\infty\frac{(-\beta)^r\tr(H_A^{r+1})}{r!d_A}\right|=O(\beta^2L)+\sum_{r=3}^\infty\frac{|\beta|^r(O(Lr))^{\frac{r+1}2}}{r!}\nonumber\\
&=O(\beta^2L+|\beta|^3L^2).
\end{align}
Hence,
\begin{align}
\mathcal E(\beta)&=\E_{|A|=L}\tr(\sigma_A(\beta)H_A)=\E_{|A|=L}\sum_{r=0}^\infty\frac{(-\beta)^r\tr(H_A^{r+1})}{r!d_A(1+O(\beta^2L))}\nonumber\\
&=-\frac{\beta\E_{|A|=L}s_A^2}{1+O(\beta^2L)}\pm O(\beta^2L+|\beta|^3L^2)\nonumber\\
&=-\beta(L-1)s^2/N\pm O(\beta^2L+|\beta|^3L^2).
\end{align}
\end{IEEEproof}

\begin{lemma} \label{l:ent}
For $|\beta|=o(1/\sqrt L)$,
\begin{equation}
\mathcal S(\beta)=L\ln d-\frac{\beta^2(L-1)s^2}{2N}\pm O(|\beta|^3L+\beta^4L^2).
\end{equation}
\end{lemma}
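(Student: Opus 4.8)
The plan is to reduce the entropy to a log-partition function via the standard Gibbs identity and then match orders against Lemma \ref{l:et}. Writing $Z_A:=\tr e^{-\beta H_A}$, the thermal state obeys $\ln\sigma_A(\beta)=-\beta H_A-(\ln Z_A)I$, so
\[
S(\sigma_A(\beta))=-\tr(\sigma_A(\beta)\ln\sigma_A(\beta))=\beta\tr(\sigma_A(\beta)H_A)+\ln Z_A .
\]
Averaging over all contiguous subsystems of length $L$ and recalling $\mathcal E(\beta)=\E_{|A|=L}\tr(\sigma_A(\beta)H_A)$ gives $\mathcal S(\beta)=\beta\mathcal E(\beta)+\E_{|A|=L}\ln Z_A$. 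The first term is immediate: multiplying the expansion of Lemma \ref{l:et} by $\beta$ yields $\beta\mathcal E(\beta)=-\beta^2(L-1)s^2/N\pm O(|\beta|^3L+\beta^4L^2)$. So everything reduces to expanding $\E_{|A|=L}\ln Z_A$.

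For the log-partition function I would reuse the moment series of the proof of Lemma \ref{l:et}. Using $\tr H_A=0$, $\tr(H_A^2)/d_A=s_A^2$, the bound $|\tr(H_A^3)|/d_A=O(L)$ (which contributes the $r=3$ term $O(|\beta|^3L)$), and $|\tr(H_A^r)|/d_A=O(Lr)^{r/2}$ with $r$-independent constant (so the tail $\sum_{r\ge4}|\beta|^rO(Lr)^{r/2}/r!$ is dominated by its $r=4$ term because $|\beta|=o(1/\sqrt L)$ makes $\beta^2L=o(1)$ and Stirling turns the series into a convergent, geometrically small one), I obtain $Z_A/d_A=1+\beta^2s_A^2/2\pm O(|\beta|^3L+\beta^4L^2)$. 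Since $s_A^2=\Theta(L)$, the quantity $x:=\beta^2s_A^2/2\pm O(\cdots)$ is $o(1)$, so $\ln(1+x)=x-x^2/2+O(x^3)$ with $x^2=O(\beta^4L^2)$ and $x^3=O(\beta^6L^3)=o(\beta^4L^2)$ gives $\ln Z_A=L\ln d+\beta^2s_A^2/2\pm O(|\beta|^3L+\beta^4L^2)$. Taking the average and invoking the identity $\E_{|A|=L}s_A^2=(L-1)s^2/N$ (the same one used in Lemma \ref{l:et}) yields $\E_{|A|=L}\ln Z_A=L\ln d+\beta^2(L-1)s^2/(2N)\pm O(|\beta|^3L+\beta^4L^2)$.

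Adding the two contributions, the $\beta^2$ terms combine as $-\beta^2(L-1)s^2/N+\beta^2(L-1)s^2/(2N)=-\beta^2(L-1)s^2/(2N)$, which is exactly the claimed leading correction, with the two error terms merging into a single $\pm O(|\beta|^3L+\beta^4L^2)$. The main obstacle I anticipate is purely in the bookkeeping of errors: I must check that the Big-$O$ constants of Lemma \ref{l:m} are uniform across all $N$ contiguous subsystems so the per-subsystem bounds survive the average $\E_{|A|=L}$, and I must verify that the cross terms produced by the $\ln(1+x)$ expansion—such as the product of $\beta^2s_A^2/2$ with the $O(|\beta|^3L)$ remainder—are genuinely absorbed into the stated error once $|\beta|=o(1/\sqrt L)$ forces $\beta^2L=o(1)$. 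Beyond this, the argument is a routine matching of orders against Lemma \ref{l:et}.
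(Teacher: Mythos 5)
Your proof is correct, but it is not the route the paper takes. The paper's entire proof is one line: it invokes the thermodynamic relation $\mathrm d\mathcal S(\beta)/\mathrm d\beta=\beta\,\mathrm d\mathcal E(\beta)/\mathrm d\beta$, integrates it (by parts) from $\mathcal S(0)=L\ln d$, and substitutes the expansion of Lemma \ref{l:et}; no further series work is needed. Your Gibbs identity $S(\sigma_A(\beta))=\beta\tr(\sigma_A(\beta)H_A)+\ln Z_A$ is precisely the integrated form of that relation (since $\mathrm d\ln Z_A/\mathrm d\beta=-\tr(\sigma_A H_A)$), but instead of obtaining $\E_{|A|=L}\ln Z_A$ by integrating $-\mathcal E$, you expand it directly from the moment bounds of Lemma \ref{l:m}, effectively redoing the bookkeeping that the paper encapsulates once in the proof of Lemma \ref{l:et}. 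The trade-off: the paper's argument is shorter and uses Lemma \ref{l:et} as a black box, but it silently requires the error bound $\pm O(\beta'^2L+|\beta'|^3L^2)$ to hold uniformly in $\beta'$ over the integration interval (true here, since the implicit constants depend only on $\max_i\|H_i\|=O(1)$, but unstated); your algebraic route avoids differentiating or integrating asymptotic expansions altogether and makes the origin of the factor $1/2$ transparent as a cancellation between $\beta\mathcal E(\beta)=-\beta^2(L-1)s^2/N+\cdots$ and $\E_{|A|=L}\ln Z_A=L\ln d+\beta^2(L-1)s^2/(2N)+\cdots$. Your two flagged worries are both benign: the constants in Lemma \ref{l:m} are uniform over the $N$ contiguous subsystems for exactly the reason above, and the cross term $O(\beta^2L)\cdot O(|\beta|^3L)=O(|\beta|^5L^2)$ is $o(\beta^4L^2)$ once $\beta^2L=o(1)$, so it is absorbed into the stated error.
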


\begin{IEEEproof}
It follows from Lemma \ref{l:et} and the thermodynamic relation
\begin{equation}
\mathrm d\mathcal S(\beta)/\mathrm d\beta=\beta\,\mathrm d\mathcal E(\beta)/\mathrm d\beta=\mathrm d(\beta\mathcal E(\beta))/\mathrm d\beta-\mathcal E(\beta).
\end{equation}
\end{IEEEproof}

\section*{Acknowledgment}

I would like to thank Masudul Haque, Ivan M. Khaymovich, and Paul A. McClarty for interesting discussions on related works \cite{Hua21NPB, HMK22} and problems. I also thank \'Alvaro M. Alhambra for explaining some technical details of Lemma 1 in Ref.~\cite{RCA24} and Lemma 8 in Ref.~\cite{BC15}.

\bibliographystyle{IEEEtran}
\bibliography{final}

\end{document}